\definecolor{codegreen}{rgb}{0,0.6,0}
\definecolor{codegray}{rgb}{0.5,0.5,0.5}
\definecolor{codepurple}{rgb}{0.58,0,0.82}
\definecolor{backcolour}{rgb}{0.95,0.95,0.92}
\lstdefinestyle{mystyle}{
	backgroundcolor=\color{backcolour},   
	commentstyle=\color{codegreen},
	keywordstyle=\color{magenta},
	numberstyle=\tiny\color{codegray},
	stringstyle=\color{codepurple},
	basicstyle=\ttfamily\footnotesize,
	breakatwhitespace=false,         
	breaklines=true,                 
	captionpos=b,                    
	keepspaces=true,                 
	numbers=left,                    
	numbersep=5pt,                  
	showspaces=false,                
	showstringspaces=false,
	showtabs=false,                  
	tabsize=2
}
\definecolor{shadecolor}{rgb}{1,0.8,0}
\DeclareMathOperator*{\argmax}{arg\,max}
\numberwithin{equation}{section}
\DeclarePairedDelimiter\floor{\lfloor}{\rfloor}
\newtheorem{theorem}{Theorem}[section]
\newtheorem{proposition}[theorem]{Proposition}
\newtheorem*{Main Result}{Main Result}
\theoremstyle{remark}
\theoremstyle{definition}
\newtheorem{definition}[theorem]{Definition}
\theoremstyle{theorem}
\theoremstyle{theorem }
\theoremstyle{remark}
\newcommand*{\emaill}[1]{\texttt{#1}}
\numberwithin{equation}{section}
\title{Online Revenue Maximization with Unknown Concave Utilities}
\begin{document}
	
	\author[Owen Shen]{
		Owen Shen
		\\
		\vspace{0.1cm}
		\\
		Department of Mathematics
		\vspace{0.1cm}
		\\
		\MakeLowercase{
			\emaill{owenshen@stanford.edu}}\\
		\vspace{0.2cm}
		Stanford University}
	\begin{abstract}
	 We study an online revenue maximization problem where the consumers arrive i.i.d from some unknown distribution and purchase a bundle of products from the sellers. The classical approach generally assumes complete knowledge of the consumer utility functions, while recent works have been devoted to unknown linear utility functions. This paper focuses on the online posted-price model with unknown consumer distribution and unknown consumer utilities, given they are concave.  Hence, the two questions to ask are i) when is the seller's online maximization problem concave, and ii) how to find the optimal pricing strategy for non-linear utilities. We answer the first question by imposing a third-order smoothness condition on the utilities. The second question is addressed by two algorithms, which we prove to exhibit the sub-linear regrets of $O(T^{\frac{2}{3}} (\log T)^{\frac{1}{3}})$ and $
	O(T^{\frac{1}{2}} (\log T)^{\frac{1}{2}})$ respectively. 
	\end{abstract}
	
	\maketitle
	
	\tableofcontents
	\section{Introduction}
	The online posted-price model plays a central role in the online maximization problem \cite{21}. In this model, each consumer arrives in the market and purchases a bundle of products from the seller according to their strictly convex utilities, and the seller is allowed to adjust its price and observe the consumer reactions. In our model, we assume no explicit budge constraint like in \cite{20}, which can be formulated into the convex utility itself. The type of seller we focus on is a third-party seller, namely that the seller is not constrained by its resource capacity but pays a cost for each product. For example, a retailer that imports goods from a large manufacturing firm. So unlike online maximization problems in \cite{24} constrained by its resource efficiency, our main constraint is the product costs. The majority of existing approaches to solving problems of this kind in the literature assume complete knowledge of the consumer preference (\cite{22}, \cite{23}), an assumption not necessarily true in the application. Online market research and consumer surveys are the necessary components of the online maximization problem, and it is not explicitly unknown to us how sensitive the optimal value is given the approximation errors generated from those surveys. This paper, therefore, aims to fill the gap of online learning in the context of unknown utilities. 

Related contributions that also focus on the online maximization problem with partially unknown objective function include \cite{20} that analyzes online Fisher market with linear utilities; \cite{15} that analyzes online linear programming with unknown demand parameter; \cite{16} that analyzes online matching problem with unknown coupling rewards; and \cite{17} that analyzes online resource allocation with unknown consumer preferences for new products. The most related result is \cite{28}, where a maximization problem on a single-item, unknown-utility model is discussed. Their regret is defined by time, where the consumer arrives at a random speed; whereas our regret is defined by the total number of consumers. 

We assume in our model $n$ discrete types of consumers each type with a fixed but different utility, and they arrive sequentially in i.i.d fashion to the seller from some unknown distribution $\mathcal{P}$. This type of i.i.d stochastic is widely used in the literature like \cite{13} and \cite{14}, while a permutation model like \cite{11} and \cite{12} is also popular. There is also a rising research area to analyze more general stochastic inputs like \cite{24}.

The consumer's purchased quantities and types are revealed to the seller, belonging to a class of revealed preference and posted-price problem \cite{25}. The main difficulty is that, only the point-wise utilities are revealed. So the seller is generally unknown of the shape of consumer utilities even after the conclusion of the selling. 

Therefore, the seller with $m$ products must approximate the consumer distributions and, given its approximated distribution, find an optimal pricing strategy. The first question, answered in section 2, is to ask when is the seller's problem concave. If the problem is concave, finding an optimal pricing strategy would be relatively easy. For example, one can employ a binary search for the optimality value. The second question is to discuss how would the errors in the distributional approximation and the errors in the optimal-solution approximation affect the optimal value. In section three, the former is characterized in terms of the convergence rate of the total variation distance of the empirical distribution, and the latter is characterized in terms of a linear function of the Euclidean distance of the approximated optimal solution to the true solution. Finally, in sections four and five, we hope to construct algorithms to handle this problem. The two classical algorithms are a one-time learning algorithm and a dynamical (geometrically learning) algorithm. Other applications of the one-time learning algorithm can be found in \cite{19} and the dynamical learning algorithm in \cite{14}. 

In conclusion, our results establish the significance of classical algorithms in a new context that requires the minimum information from the online market. This generality is the main contribution of this paper.

	\section{When Is The Online Maximization A Concave Problem}
We start with an online market model with $n$ types of customers and $m$ types of products. Each type of customer $i$ has a strictly concave utility function with quantity parameter denoted as $k$: $$v_i(j,k):\{1,\cdots, m\}\times \mathbb{R}_+\to \mathbb{R}_+$$
such that the total utility type $i$ customer receives after purchasing quantities $k_j$ of each type of product $j$ is
$$\sum_{j=1}^{m}v_i(j,k_j). $$
For any rational buyer, under the assumption of no budget constraint (the budget constraint can be formulated into the concave utility functions), she will purchase enough product of type $j$ until the marginal increase of the utility matches with the price of that product. Hence, her optimal purchased order, with a given price $p\in \mathbb{R}^m$ is
$$ k^*=\argmax_{k\in \mathbb{R}^m} \sum_{j=1}^{m}v_i(j,k_j)-k_j\cdot p.$$
When the problem is feasible and since each $v_j$ is a concave, differentiable and linearly separable utility, we have 
$$k^*_j=\left[\frac{\partial }{\partial k_j}v_i(j,k_j)\right]^{-1}(p_j).$$
The revenue collected by selling at price $p$ is therefore to customer type $i$ is 
\begin{equation}\sum_{j=1}^{m}\left[\frac{\partial }{\partial k_j}v_i(j,k_j)\right]^{-1}(p_j)\cdot (p_j-c_j).\label{eq:1}\end{equation}
Let $i$ be drawn according to the distribution of the types of the customer, we now formulate the online maximization problem into the following: 

$$\max_{p\in \mathbb{R}^m}\sum_{j=1}^{m}\mathbb{E}_{i\sim \mathcal{P}}\left\{\left[\frac{\partial }{\partial k_j}v_i(j,k_j)\right]^{-1}(p_j)\cdot p_j\right\}.$$
The goal is to ask when is the above revenue a convex problem in terms of $p$. By our linear separability assumption and the fact that conic combination preserves convexity, it suffices to check the convexity and maximization condition over a single dimension $j$ for a single product 
\begin{equation}\max_{p_j}\left[\frac{\partial }{\partial k_j}v_i(j,k_j)\right]^{-1}(p_j)\cdot (p_j-c_j).\label{eq:2}\end{equation}
To find the conditions under which the above problem is concave, we need to check the following is negative:
\begin{equation} \frac{\partial^2 }{\partial k_j^2} \left[\frac{\partial }{\partial k_j}v_i(j,k_j)\right]^{-1}(p_j)\cdot p_j
=\frac{2-(p_j-c_j)\frac{\partial^3}{\partial k_j^3} v_i(j,\cdot)\circ \left[ \frac{\partial}{\partial k_j} v(\cdot)\right]^{-1}(p_j)}{\frac{\partial^2}{\partial k_j^2} v_i(j,\cdot)\circ \left[ \frac{\partial}{\partial k_j} v(\cdot)\right]^{-1}(p_j)}.
\end{equation}
Since we know $v_j$ is concave, it suffices to check that 
\begin{equation}2-(p_j-c_j)\frac{\partial^3}{\partial k_j^3} v_i(j,\cdot)\circ \left[ \frac{\partial}{\partial k_j} v(\cdot)\right]^{-1}(p_j)\geq 0.\end{equation}
If we standardize the price as $p_j-c_j\in [0,1]$, we can require a smoothness condition 
$$\frac{\partial^3}{\partial k_j^3}v_i(j,k_j)\leq 2
$$
to ensure that our problem is concave. In other words, we require the curvature of the utility function not to increase too much. So if we summarize our above derivation, we get the following theorem:

\begin{theorem}\label{theorem 1}
Let $v$ be a linearly separable utility function with dimension $\mathbb{R}^n\times \mathbb{R}^m$
such that for each $v_i(j,k_j)$, we have 
$$\frac{\partial^3}{\partial k_j^3}v_i(j,k_j)\leq 2.
$$
Furthermore, the price parameter $p-c\in \mathbb{R}^m\in [0,1]^m$ is non-discriminating. Then, the optimization problem $$\max_{p\in \mathbb{R}^m}\sum_{j=1}^{m}\mathbb{E}_{i\sim \mathcal{P}}\left\{\left[\frac{\partial }{\partial k_j}v_i(j,k_j)\right]^{-1}(p_j)\cdot (p_j-c_j)\right\}$$
 is a concave problem in $p$. 
\end{theorem}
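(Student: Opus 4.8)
The plan is to carry out the two structural reductions already prepared in the text and then close with a one-variable second-derivative estimate. Write the objective as $\Phi(p)=\sum_{j=1}^{m}\mathbb{E}_{i\sim\mathcal{P}}\bigl[g_{ij}(p_j)\bigr]$, where $g_{ij}(p_j):=\bigl[\tfrac{\partial}{\partial k_j}v_i(j,\cdot)\bigr]^{-1}(p_j)\,(p_j-c_j)$ is the revenue extracted from a single type $i$ on a single product $j$ at margin $p_j-c_j$. First I would observe that $\Phi$ is a sum of terms, the $j$-th of which depends on $p$ only through $p_j$; since a sum of one-variable concave functions of disjoint coordinates is concave, $\Phi$ is concave as soon as $p_j\mapsto \mathbb{E}_{i\sim\mathcal{P}}[g_{ij}(p_j)]$ is concave for every $j$. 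Next, for fixed $j$ this map is a nonnegative (indeed probability) mixture of the functions $g_{ij}$ over $i$, and nonnegative mixtures preserve concavity; so it suffices to prove that each $g_{ij}$ is concave on the admissible price range. This is exactly the passage to the single-product problem in the text.

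For that one-dimensional claim, fix $i$ and $j$ and abbreviate $\phi:=v_i(j,\cdot)$, which by hypothesis is strictly concave and three times differentiable, so $\phi''<0$. Strict concavity makes $\phi'$ strictly decreasing, hence a bijection onto its range, and the inverse function theorem yields $h:=(\phi')^{-1}\in C^2$ with $h'=1/(\phi''\circ h)$ and $h''=-(\phi'''\circ h)/(\phi''\circ h)^3$. Writing $r(p_j):=h(p_j)(p_j-c_j)=g_{ij}(p_j)$ and differentiating twice in $p_j$ gives $r'(p_j)=h'(p_j)(p_j-c_j)+h(p_j)$ and $r''(p_j)=h''(p_j)(p_j-c_j)+2h'(p_j)$; substituting the expressions for $h'$ and $h''$ reproduces the second-derivative identity displayed in the text. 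Because $\phi''\circ h<0$, the denominator there is negative, so $r''\le 0$ is equivalent to the numerator being nonnegative, i.e. to the inequality $2-(p_j-c_j)(\phi'''\circ h)(p_j)\ge 0$ already isolated in the text.

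It then remains only to invoke the two hypotheses of the theorem on the admissible range. The normalization that the non-discriminating margin lies in $[0,1]$ gives $p_j-c_j\ge 0$, and the third-order bound $\tfrac{\partial^3}{\partial k_j^3}v_i(j,\cdot)\le 2$ gives $(\phi'''\circ h)(p_j)\le 2$; multiplying these (the case $(\phi'''\circ h)(p_j)<0$ being immediate, since the product is then $\le 0\le 2$) gives $(p_j-c_j)(\phi'''\circ h)(p_j)\le 2$, whence the numerator above is $\ge 0$, so $r''\le 0$ and $g_{ij}$ is concave. Propagating this through the mixture-over-$i$ and sum-over-$j$ reductions shows $\Phi$ is concave, and as the problem maximizes $\Phi$, it is a concave program.

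The algebra is routine; the real work is in making the two reductions and the inversion airtight. Specifically: (i) one must justify that $\tfrac{\partial}{\partial k_j}v_i(j,\cdot)$ is invertible with a $C^2$ inverse on the range of prices in play — this is exactly where strict concavity, $C^3$ smoothness, and the implicit feasibility assumption noted in the derivation of $k^*$ are genuinely used; (ii) one must state the closure properties precisely, namely that concavity is preserved under nonnegative mixtures (here $\mathbb{E}_{i\sim\mathcal{P}}$, with $\mathcal{P}$ an arbitrary distribution on the $n$ types) and under summing functions of disjoint coordinate blocks (here $\sum_{j=1}^m$); and (iii) one must watch the sign when clearing the denominator in the second-derivative identity, since $\phi''\circ h<0$, which is precisely what converts ``$r''\le 0$'' into the clean inequality ``$2-(p_j-c_j)(\phi'''\circ h)\ge 0$''. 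I expect (i) to be the only point needing a genuine argument rather than bookkeeping.
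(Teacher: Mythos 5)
Your route is the same as the paper's: separate across products $j$ (each summand depends on $p$ only through $p_j$), use that $\mathbb{E}_{i\sim\mathcal{P}}$ is a nonnegative mixture over types, and reduce to a sign check on the second derivative of the one-dimensional revenue $r(p_j)=h(p_j)(p_j-c_j)$ with $h=(\phi')^{-1}$, $\phi=v_i(j,\cdot)$. Your formulas $h'=1/(\phi''\circ h)$, $h''=-(\phi'''\circ h)/(\phi''\circ h)^3$ and $r''=h''\cdot(p_j-c_j)+2h'$ are correct, and the three points you flag (invertibility of $\phi'$, closure properties, and the sign of the denominator) are exactly where the paper is implicitly using strict concavity, feasibility, and $\phi''<0$.

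One concrete gap: substituting your own expressions for $h'$ and $h''$ does \emph{not} reproduce the identity displayed in the text. It gives $r''=\bigl(2(\phi''\circ h)^2-(p_j-c_j)(\phi'''\circ h)\bigr)/(\phi''\circ h)^3$, equivalently $r''=\frac{1}{\phi''\circ h}\Bigl(2-\frac{(p_j-c_j)(\phi'''\circ h)}{(\phi''\circ h)^2}\Bigr)$; the text's numerator omits the factor $(\phi''\circ h)^{-2}$ on the third-derivative term. Since the denominator is negative, the condition actually needed is $(p_j-c_j)(\phi'''\circ h)\le 2(\phi''\circ h)^2$, and the hypotheses $\phi'''\le 2$ and $p_j-c_j\in[0,1]$ deliver this only when $(\phi''\circ h)^2\ge 1$ (i.e.\ under a curvature normalization $|\phi''|\ge 1$, or if the smoothness condition is read as $\phi'''\le 2(\phi'')^2$). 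This slip is inherited from the paper's own displayed derivation, so as a comparison you are following the same path as the paper; but your closing step (``multiplying these gives the numerator is $\ge 0$'') does not follow from the formulas you yourself derived, and you should either add the normalization on $\phi''$ or restate the third-order condition in the form $\phi'''\le 2(\phi'')^2$ to make the one-dimensional concavity claim airtight.
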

Since the problem is now concave, we can perform fast search algorithms to find the optimal pricing strategy. In particular, we hope to analyze the optimization problem in online fashion with noise from the market.

There are two things needed to be learned in our problem: the underlying distribution of the consumer and the consumer utilities. We assume at each time a consumer reveals her types by her purchasing behaviors, so branching is not allowed. The main issue is to learn the optimal pricing strategy. The method we will consider here is a simple binary search for the maximum point of a concave function. Any search algorithm with an exponential convergence rate will produce the same result using our analysis. Since we will soon see that the empirical distribution can only converge at an exponential rate, any search algorithm better than the exponential rate will not improve the order of the regret. We will consider two algorithms: the one-time learning algorithm and the geometric learning algorithm.

\section{Regrets Of Algorithms}
Before we analyze the efficiency of the algorithms, we first discuss our notion of regret:
\begin{definition}
We define the regret of our algorithms with a pricing policy $p(t)$ at time $t$, denoted as $\pi$, of operation horizon $T$ as 
 $$R_T(\pi):=T\max_{p\in \mathbb{R}^m}\sum_{j=1}^{m}\mathbb{E}_{i\sim \mathcal{P}}\left\{\left[\frac{\partial }{\partial k_j}v_i(j,k_j)\right]^{-1}(p_j)\cdot (p_j-c_j)\right\}$$$$-\sum_{j=1}^{m}\mathbb{E}_{i\sim \mathcal{P}}\sum_{t=1}^{T}\left\{\left[\frac{\partial }{\partial k_j}v_i(j,k_j)\right]^{-1}(p_j(t))\cdot (p_j(t)-c_j)\right\}$$
 where $p_j(t)$ is the price of product $j$ the algorithm provides at time
 $t$.\end{definition} 
 This notion of regret is different from another popular choice known as the competitive-ratio like in \cite{17}, for we are interested in the long-term performance of the algorithms, instead of the fixed-time competitiveness. 
 
 Now we are interested in knowing how would the approximation errors affect optimal value. We can use first order analysis to write, using $p_j^*$ to denote the optimal pricing,
 
 \begin{align}
  &R_T(\pi)\\
  \leq &\sum_{j=1}^{m} \mathbb{E}_{i\sim \mathcal{P}} \sum_{t=1}^{T} |p_j(t)-p_j^*|\max_{p_j}\frac{\partial }{\partial p_j}\left\{\left[\frac{\partial }{\partial k_j}v_i(j,k_j)\right]^{-1}(p_j)\cdot (p_j-c_j)\right\}\\
  =&  \sum_{j=1}^{m} \mathbb{E}_{i\sim \mathcal{P}} \sum_{t=1}^{T} |p_j(t)-p_j^*|\max_{p_j} \left\{\left[\frac{\partial }{\partial k_j}v_i(j,k_j)\right]^{-1}(p_j)+\frac{(p_j-c_j)}{\frac{\partial^2}{\partial k_j^2} v_i(j,\cdot)\circ \left[ \frac{\partial}{\partial k_j} v(\cdot)\right]^{-1}(p_j)}\right\}
 \end{align}
 Let us observe that 
$$\left[\frac{\partial }{\partial k_j}v_i(j,k_j)\right]^{-1}(p_j)$$ is the quantity purchased. Since the utility is strictly convex, we know that at the lowest possible price $p_j=c_j>0$, the consumer purchase a finite number of products. Moreover, suppose that for some positive $\epsilon$, we have for all feasible $p_j$,
$$\frac{\partial^2}{\partial k_j^2} v_i(j,\cdot)\circ \left[ \frac{\partial}{\partial k_j} v(\cdot)\right]^{-1}(p_j)<-\epsilon.$$
Then both quantities are finite. This establishes that the difference between the optimal and the sub-optimal Lipschitz in $p$. 

To summarize our observation, we have the following proposition:
\begin{proposition}\label{regret}
Let the conditions of Theorem \ref{theorem 1} be satisfied and for some positive $\epsilon$, we have for all feasible $p_j$,
$$\frac{\partial^2}{\partial k_j^2} v_i(j,\cdot)\circ \left[ \frac{\partial}{\partial k_j} v(\cdot)\right]^{-1}(p_j)<-\epsilon.$$
Then, there exists some finite constant $M$ such that 
$$R_T(\pi)\leq M\sum_{t=1}^{T}\|p(t)-p^*\|.$$
\end{proposition}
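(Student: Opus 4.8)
The plan is to make rigorous the first-order estimate already sketched above the statement. Using the linear separability assumption and the same reduction that passed from the joint objective to \eqref{eq:2}, the optimizer $p^*$ decomposes coordinatewise: each $p_j^*$ maximizes $\mathbb{E}_{i\sim\mathcal P}\{g_{i,j}(p_j)\}$, where
$$g_{i,j}(p_j):=\left[\frac{\partial}{\partial k_j}v_i(j,k_j)\right]^{-1}(p_j)\,(p_j-c_j),$$
so that $R_T(\pi)=\sum_{j=1}^{m}\mathbb{E}_{i\sim\mathcal P}\sum_{t=1}^{T}\bigl(g_{i,j}(p_j^*)-g_{i,j}(p_j(t))\bigr)$. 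First I would check that $g_{i,j}\in C^1$ on the standardized feasible interval $p_j\in[c_j,c_j+1]$: Theorem \ref{theorem 1} already assumes $v_i(j,\cdot)\in C^3$ with strictly negative second derivative, so $\frac{\partial}{\partial k_j}v_i(j,\cdot)$ is a strictly decreasing $C^2$ diffeomorphism onto its range, its inverse is $C^2$, and the inverse-function rule reproduces exactly the derivative written in the derivation,
$$g_{i,j}'(p_j)=\left[\frac{\partial}{\partial k_j}v_i(j,k_j)\right]^{-1}(p_j)+\frac{p_j-c_j}{\frac{\partial^2}{\partial k_j^2}v_i(j,\cdot)\circ\left[\frac{\partial}{\partial k_j}v(\cdot)\right]^{-1}(p_j)}.$$
Here I would also record the standing feasibility hypothesis that $[c_j,c_j+1]$ lies in the range of $\frac{\partial}{\partial k_j}v_i(j,\cdot)$, or else replace the demand by $\max\{0,\cdot\}$ on the part of the interval where marginal utility is exhausted, which only improves the bounds.

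Next I would bound $|g_{i,j}'|$ uniformly in $p_j$ and in the type $i$. The first summand is the purchased quantity; it is nonincreasing in $p_j$ because $\frac{\partial}{\partial k_j}v_i(j,\cdot)$ is strictly decreasing, hence it is at most its value at the lowest feasible price $p_j=c_j>0$, a finite number $Q_{i,j}:=\left[\frac{\partial}{\partial k_j}v_i(j,k_j)\right]^{-1}(c_j)<\infty$ by strict concavity. For the second summand, the standardization $p_j-c_j\in[0,1]$ bounds the numerator by $1$, while the hypothesis $\frac{\partial^2}{\partial k_j^2}v_i(j,\cdot)\circ[\cdots]^{-1}(p_j)<-\epsilon$ bounds the denominator away from $0$, so that term is at most $1/\epsilon$ in absolute value. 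Consequently $L_{i,j}:=\sup_{p_j\in[c_j,c_j+1]}|g_{i,j}'(p_j)|\le Q_{i,j}+1/\epsilon<\infty$.

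With these Lipschitz constants in hand, the mean value theorem gives $|g_{i,j}(p_j^*)-g_{i,j}(p_j(t))|\le L_{i,j}\,|p_j^*-p_j(t)|$; since $p(t)$ is chosen before customer $t$ is observed it does not depend on $i$, so averaging over the $n$ discrete types and summing over $t$ and $j$ yields
$$R_T(\pi)\le\sum_{j=1}^{m}\Bigl(\mathbb{E}_{i\sim\mathcal P}L_{i,j}\Bigr)\sum_{t=1}^{T}|p_j^*-p_j(t)|\le\Bigl(\max_{1\le j\le m}\mathbb{E}_{i\sim\mathcal P}L_{i,j}\Bigr)\sum_{t=1}^{T}\|p(t)-p^*\|_1,$$
which is the claim with $M:=\max_{1\le j\le m}\mathbb{E}_{i\sim\mathcal P}L_{i,j}$ (replace $M$ by $\sqrt{m}\,M$ if $\|\cdot\|$ is the Euclidean norm, using $\|x\|_1\le\sqrt m\,\|x\|_2$). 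Finiteness of $M$ is automatic since there are finitely many types and finitely many products.

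The main obstacle is not any single estimate but the regularity and feasibility bookkeeping that legitimizes the first-order step: one must ensure $\frac{\partial}{\partial k_j}v_i(j,\cdot)$ is genuinely invertible with a $C^1$ inverse on the entire standardized price range (so that $g_{i,j}'$ exists and the inverse-function formula is valid), and one must dispatch the corner case in which a type's marginal utility already lies below $c_j$, so its demand is pinned at $0$. Both are handled by the strict concavity and $C^3$ hypotheses of Theorem \ref{theorem 1} together with $c_j>0$, but I would state them explicitly rather than leave them implicit.
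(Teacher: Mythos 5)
Your proposal is correct and follows essentially the same route as the paper: the paper's own argument is precisely the first-order bound $R_T(\pi)\leq\sum_j\mathbb{E}_i\sum_t|p_j(t)-p_j^*|\max_{p_j}|g_{i,j}'(p_j)|$, with the derivative bounded by the finite purchased quantity at $p_j=c_j$ plus the $1/\epsilon$ term coming from the curvature hypothesis. Your version merely adds the bookkeeping the paper leaves implicit (the mean value theorem step, invertibility and feasibility of the demand map, and the $\ell_1$-to-Euclidean norm conversion), which tightens but does not change the argument.
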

Moreover, we are not only interested in the sensitivity of the optimal value regarding the pricing strategy, we are also interested in its sensitivity over the underlying distributions, for our algorithms have two sources of errors: the approximations in the distributions and the approximation in the optimal pricing given the approximated distributions. Our algorithms use empirical distribution $	\hat{P}_t^{(*)}$ to solve for the optimal pricing, so we are interested in the difference 

 \begin{align}
 &\max_{p\in \mathbb{R}^m}\sum_{j=1}^{m}\mathbb{E}_{i\sim \mathcal{P}}\left\{\left[\frac{\partial }{\partial k_j}v_i(j,k_j)\right]^{-1}(p_j)\cdot (p_j-c_j)\right\}\\
 -&\max_{p\in \mathbb{R}^m}\sum_{j=1}^{m}\mathbb{E}_{i\sim \hat{P}_t^{(*)}}\left\{\left[\frac{\partial }{\partial k_j}v_i(j,k_j)\right]^{-1}(p_j)\cdot (p_j-c_j)\right\}.
 \end{align}
If we denote the empirical distribution of type $i$ consumer as $\hat{w}_i$ while the true distribution is $w_i$, we have the following bound using total variation distance:

 \begin{align}
 &\max_{p\in \mathbb{R}^m}\sum_{j=1}^{m}\mathbb{E}_{i\sim \mathcal{P}}\left\{\left[\frac{\partial }{\partial k_j}v_i(j,k_j)\right]^{-1}(p_j)\cdot (p_j-c_j)\right\}\\
 \leq & \max_{p\in \mathbb{R}^m}\sum_{j=1}^{m}\sum_{i=1}^{n}\min\{\hat{w}_i,w_i \}\left\{\left[\frac{\partial }{\partial k_j}v_i(j,k_j)\right]^{-1}(p_j)\cdot (p_j-c_j)\right\}\\
 +& \max_{p\in \mathbb{R}^m}\sum_{j=1}^{m}\sum_{i=1}^{n}\left(w_i-\min\{\hat{w}_i,w_i \}\right)\left\{\left[\frac{\partial }{\partial k_j}v_i(j,k_j)\right]^{-1}(p_j)\cdot (p_j-c_j)\right\}\\
 \leq &  \max_{p\in \mathbb{R}^m}\sum_{j=1}^{m}\sum_{i=1}^{n}\min\{\hat{w}_i,w_i \}\left\{\left[\frac{\partial }{\partial k_j}v_i(j,k_j)\right]^{-1}(p_j)\cdot (p_j-c_j)\right\}\\
 +& d_{TV}(\mathcal{P}, \hat{P}_t^{(*)})\max_{p\in \mathbb{R}^m} \sum_{j=1}^{m}\left[\frac{\partial }{\partial k_j}v_i(j,k_j)\right]^{-1}(p_j)\\
 \leq &  \max_{p\in \mathbb{R}^m}\sum_{j=1}^{m}\sum_{i=1}^{n}\hat{w}_i\left\{\left[\frac{\partial }{\partial k_j}v_i(j,k_j)\right]^{-1}(p_j)\cdot (p_j-c_j)\right\} +d_{TV}(\mathcal{P}, \hat{P}_t^{(*)}) M'\\
 \leq & \max_{p\in \mathbb{R}^m}\sum_{j=1}^{m}\mathbb{E}_{i\sim \hat{P}_t^{(*)}}\left\{\left[\frac{\partial }{\partial k_j}v_i(j,k_j)\right]^{-1}(p_j)\cdot (p_j-c_j)\right\}+M'd_{TV}(\mathcal{P}, \hat{P}_t^{(*)})
 \end{align}
 where we have 
 $$\max_{p\in \mathbb{R}^m} \sum_{j=1}^{m}\left[\frac{\partial }{\partial k_j}v_i(j,k_j)\right]^{-1}(p_j)\leq M'$$
for some constant $M'$ because the utility is concave such that the maximum purchase quantity is finite given the nonzero cost of each product. So, to summarize our observation, we have the following regret:

\begin{theorem}
\label{regret theorem}
Let the conditions of Theorem \ref{theorem 1} be satisfied and for some positive $\epsilon$, we have for all feasible $p_j$,
$$\frac{\partial^2}{\partial k_j^2} v_i(j,\cdot)\circ \left[ \frac{\partial}{\partial k_j} v(\cdot)\right]^{-1}(p_j)<-\epsilon.$$
Then, there exists some finite constant $M$ and $M'$ such that 
$$R_T(\pi)\leq \sum_{t=1}^{T}\left(M\|p(t)-p(t)^*\|+M'd_{TV}(\mathcal{P}, \hat{P}_t^{(*)})\right),$$
where $\hat{P}_t^{(*)}$ is the latest available approximation of the distribution and $$p(t)^*:=\argmax_{p\in \mathbb{R}^m} \sum_{j=1}^{m}\mathbb{E}_{i\sim \hat{P}_t^{(*)}}\left\{\left[\frac{\partial }{\partial k_j}v_i(j,k_j)\right]^{-1}(p_j)\cdot (p_j-c_j)\right\}. $$
\end{theorem}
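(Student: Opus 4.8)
The plan is to bound the instantaneous regret at each round $t$ by a sum of an \emph{optimization error} (playing $p(t)$ instead of the maximizer $p(t)^*$ of the empirical problem) and a \emph{distributional error} (solving against $\hat{P}_t^{(*)}$ instead of $\mathcal{P}$), then to control the first by the Lipschitz estimate already established in the run-up to Proposition \ref{regret}, and the second by the total-variation estimate in the displayed chain of inequalities above. Throughout, write $\mathrm{Rev}_Q(p):=\sum_{j=1}^m \mathbb{E}_{i\sim Q}\{[\partial_{k_j}v_i(j,k_j)]^{-1}(p_j)(p_j-c_j)\}$ for a type-distribution $Q$, and $\mathrm{OPT}(Q):=\max_{p}\mathrm{Rev}_Q(p)$; then by definition $R_T(\pi)=\sum_{t=1}^T\big(\mathrm{OPT}(\mathcal{P})-\mathrm{Rev}_{\mathcal{P}}(p(t))\big)$, so it suffices to bound each summand.

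First I would record the two uniform estimates that do all the work. (i) The first-order computation preceding Proposition \ref{regret} shows that, under the hypothesis $\partial^2_{k_j}v_i\circ[\partial_{k_j}v]^{-1}<-\epsilon$, the per-type revenue $p\mapsto[\partial_{k_j}v_i(j,k_j)]^{-1}(p_j)(p_j-c_j)$ has derivative bounded in absolute value by a constant depending only on $\epsilon$, the costs, and the utilities; since there are finitely many types, this yields a Lipschitz constant $M$ for $p\mapsto\mathrm{Rev}_Q(p)$ that is \emph{independent of $Q$}, i.e.\ $|\mathrm{Rev}_Q(p)-\mathrm{Rev}_Q(p')|\le M\|p-p'\|$ for every $Q$ and all feasible $p,p'$. (ii) The chain of inequalities above shows that the per-type revenue is bounded in absolute value by a constant $M'$ (finite maximal purchase quantity times $p_j-c_j\in[0,1]$), whence for a fixed price $p$, $|\mathrm{Rev}_{\mathcal{P}}(p)-\mathrm{Rev}_{\hat{P}_t^{(*)}}(p)|=|\sum_i(w_i-\hat{w}_i)\,\mathrm{rev}_i(p)|\le M'\,d_{TV}(\mathcal{P},\hat{P}_t^{(*)})$; taking the max over $p$ gives $|\mathrm{OPT}(\mathcal{P})-\mathrm{OPT}(\hat{P}_t^{(*)})|\le M'\,d_{TV}(\mathcal{P},\hat{P}_t^{(*)})$ as well.

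Next I would insert $\mathrm{OPT}(\hat{P}_t^{(*)})=\mathrm{Rev}_{\hat{P}_t^{(*)}}(p(t)^*)$ and telescope:
\begin{align}
\mathrm{OPT}(\mathcal{P})-\mathrm{Rev}_{\mathcal{P}}(p(t))
&= \big(\mathrm{OPT}(\mathcal{P})-\mathrm{Rev}_{\hat{P}_t^{(*)}}(p(t)^*)\big) \nonumber\\
&\quad + \big(\mathrm{Rev}_{\hat{P}_t^{(*)}}(p(t)^*)-\mathrm{Rev}_{\hat{P}_t^{(*)}}(p(t))\big) \nonumber\\
&\quad + \big(\mathrm{Rev}_{\hat{P}_t^{(*)}}(p(t))-\mathrm{Rev}_{\mathcal{P}}(p(t))\big). \nonumber
\end{align}
The first and third parentheses are each at most $M'\,d_{TV}(\mathcal{P},\hat{P}_t^{(*)})$ by estimate (ii); the middle one is at most $M\|p(t)-p(t)^*\|$ by the uniform Lipschitz bound (i) applied with $Q=\hat{P}_t^{(*)}$ — here $p(t)^*$ exists because, by Theorem \ref{theorem 1}, the empirical objective is concave on the compact feasible box $c+[0,1]^m$. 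Summing over $t$ and absorbing the factor $2$ into $M'$ gives $R_T(\pi)\le\sum_{t=1}^T\big(M\|p(t)-p(t)^*\|+M'\,d_{TV}(\mathcal{P},\hat{P}_t^{(*)})\big)$.

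The main obstacle is the uniformity of the constants $M$ and $M'$: they must not depend on the type index $i$, on the (time-varying) empirical distributions $\hat{P}_t^{(*)}$, or on the feasible price, since otherwise the per-round bound would carry a $t$-dependent factor and the sum would not close into the stated form. This is precisely where the hypotheses are used — the $-\epsilon$ bound caps the inverse marginal's derivative uniformly in $i$ and $p$, and concavity together with $c_j>0$ caps the purchased quantity uniformly — and all of these are statements about the finitely many utilities $v_i$, not about the distribution. A minor point to verify is that Proposition \ref{regret}'s argument goes through verbatim with $p^*$ replaced by the empirical optimizer $p(t)^*$; it does, because that argument only invoked the derivative bound, never the optimality of $p^*$.
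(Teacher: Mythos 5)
Your proposal is correct and follows essentially the same route as the paper: the Lipschitz-in-price estimate from the first-order analysis behind Proposition \ref{regret} together with the total-variation bound from the displayed chain of inequalities, combined per round and summed over $t$. Your explicit three-term telescoping through $\mathrm{Rev}_{\hat{P}_t^{(*)}}(p(t)^*)$ and $\mathrm{Rev}_{\hat{P}_t^{(*)}}(p(t))$, with the factor $2$ absorbed into $M'$, merely spells out the combination that the paper leaves implicit when it summarizes the two observations into Theorem \ref{regret theorem}.
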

Hence, this theorem establishes that if we can approximate the underlying distribution of the consumer and the optimal pricing given the approximated distribution fast enough, we get a small regret. In the next section, we will particularize how to approximate the underlying distributions and solve for optimal pricing.  

Finally, to end this section, let us discuss the conditions listed in Theorem \ref{regret theorem}, which may seem to be restrictive. Actually, many standard utility functions satisfy the given regularity condition.
\subsubsection{Log-like Utility}
The classical log utility $\log(1+k)$ is differentiable in all orders in non-negative purchase quantity $k$. Moreover, the feasible region is bounded, for the consumer purchases zero product if the price exceeds $1$. 
Within the bounded region, the second-order regularity is satisfied. The third-order regularity is satisfied in the global sense. 
\subsubsection{Exponential-like Utility}
Another classical exponential utility $\left(1-e^{-k}\right)$ satisfies the conditions for a similar reason.

\section{One-Time Learning Algorithm }
Now we move to the analysis of the Algorithms using the regret tools developed above. In this section, we use $k_i(p_j)$ to denote the quantity of the purchase of item $j$ with price $p_j$ by consumers of type $i$, and we use random variable $X_t$ to denote the type of consumer at time $t$.

	\begin{algorithm}[H]
		\caption{One-Time Learning Algorithm}\label{alg:1}
		\begin{algorithmic}[1]
			\State Input: $(\{c_j\}, T)$.
			\State Initiate the price equals the cost: $p=c$.
		    \For  {$t\leq (T)^{\frac{2}{3}}(\log T)^{\frac{1}{3}}$}
		    \State Sell at the price parameter $p$ and observe the consumer type.
				\EndFor
				
				\State Compute the empirical distribution for types of consumer $X$ up to time $(T)^{\frac{2}{3}}(\log T)^{\frac{1}{3}}$:
				$$
			\hat{P}_{(T)^{\frac{2}{3}}(\log T)^{\frac{1}{3}}}^{(*)}:=$$$$\left(\frac{1}{\floor{(T)^{\frac{2}{3}}(\log T)^{\frac{1}{3}}}} \sum_{i=1}^{\floor{(T)^{\frac{2}{3}}(\log T)^{\frac{1}{3}}}} \mathbbm{1}_{\left\{X_{i}=\text{ type 1}\right\}},\cdots , \frac{1}{\floor{(T)^{\frac{2}{3}}(\log T)^{\frac{1}{3}}}} \sum_{i=1}^{\floor{(T)^{\frac{2}{3}}(\log T)^{\frac{1}{3}}}} \mathbbm{1}_{\left\{X_{i}=\text{ type n}\right\}}\right)
				$$
			\State Initiate $h=0$, and $p_0=p$.
				\For  {$(T)^{\frac{2}{3}}(\log T)^{\frac{1}{3}}\leq t\leq 2(T)^{\frac{2}{3}}(\log T)^{\frac{1}{3}}$} 
				\State Perform binary search for each type: set the test price $$\underline{p_{j,h}}=p_{j,h}-\frac{|p_{j,h-1}|}{2}$$ and $$\bar{p_{j,h}}=p_{j,h}+\frac{|p_{j,h-1}|}{2},$$ 
				where $\underline{p_{j,-1}},\bar{p_{j,-1}}$ are the lower and upper bound for the feasible $p_j$. After observing all $$\{k_1(\underline{p_{j,h}}),\cdots, k_n(\underline{p_{j,h}})\}\cup\{k_1(\bar{p_{j,h}}),\cdots, k_n(\bar{p_{j,h}})\}, $$ let
				$$p_{j,h}=\argmax_{p_{j,h}\in \{ \underline{p_{j,h}},\bar{p_{j,h}}\}} \mathbb{E}_{i\sim 	\hat{P}_L^{(*)}} k_i(p_j)\cdot (p_j-c_j).$$
				\State $h=h+1$.
				\EndFor
				\For  {$t\geq 2(T)^{\frac{2}{3}}(\log T)^{\frac{1}{3}}$} 
				\State Sell according to $p_{H((T)^{\frac{2}{3}}(\log T)^{\frac{1}{3}})}$, where $H((T)^{\frac{2}{3}}(\log T)^{\frac{1}{3}})$ is a random variable that denotes the number of successful updated price.
			
				\EndFor
			
		\end{algorithmic}
	\end{algorithm}
	Before we proceed to the discussion of the regret, let us note the complexity of this algorithm is extremely low, for this primal-based algorithm requires no computation. It first simply observes the data in a memory-free fashion to learn the distribution. Then, it perturbs the price and again observes the reactions, which requires memory storage of $2n\cdot m$ units. Hence, the overall complexity is simply $O((T)^{\frac{2}{3}}(\log T)^{\frac{1}{3}})$. Having discussed the efficiency of the algorithm, we now move to discuss its performance. 
	
	\begin{theorem}
	With the policy $\pi_1$ specified in Algorithm \ref{alg:1}, we have 
	$$R_T(\pi_1)\leq O((T)^{\frac{2}{3}}(\log T)^{\frac{1}{3}}).$$
	\end{theorem}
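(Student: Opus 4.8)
The plan is to apply Theorem~\ref{regret theorem} and bound its right-hand side over the three stages of Algorithm~\ref{alg:1}: the observation stage ($t \le L$), the binary-search stage ($L \le t \le 2L$), and the exploitation stage ($t \ge 2L$), where $L := \floor{T^{2/3}(\log T)^{1/3}}$. First I would dispose of the two exploration stages by a trivial bound: since $p - c \in [0,1]^m$ and the quantity purchased per consumer is at most $M'$, the optimal per-step revenue is at most $M'$, so the instantaneous regret is at most $M'$ at every time step, and the first two stages contribute at most $2M'L = O(T^{2/3}(\log T)^{1/3})$ to $R_T(\pi_1)$. This uses nothing beyond feasibility of the posted price and the quantity bound recorded in Theorem~\ref{regret theorem}.

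The real work is the exploitation stage, where $p(t) \equiv p_H$ and $\hat P_t^{(*)} \equiv \hat P_L^{(*)}$, so Theorem~\ref{regret theorem} bounds this stage's regret by $M T\,\|p_H - p(t)^*\| + M' T\, d_{TV}(\mathcal P, \hat P_L^{(*)})$. For the first summand I would invoke Theorem~\ref{theorem 1} with the empirical law $\hat P_L^{(*)}$ in place of $\mathcal P$: the objective $p_j \mapsto \E_{i\sim\hat P_L^{(*)}} k_i(p_j)(p_j-c_j)$ is then concave on the feasible interval, so the interval-halving rule of lines~8--10 contracts the bracket around the maximizer by a factor of two per update, giving $\|p_H - p(t)^*\| \le \sqrt m\, 2^{-H}$. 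Since each update consumes a number of arrivals that depends only on $n$ and $\min_i w_i$ (not on $T$), the search stage produces $H = \Omega(L)$ updates, hence $M T\, 2^{-\Omega(L)} = o(1)$ because $L$ is polynomial in $T$. In effect the binary-search error is negligible and the exploitation regret is governed entirely by the total-variation term.

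For the total-variation term I would use concentration of the empirical distribution: each coordinate of $\hat P_L^{(*)}$ is an average of $L$ i.i.d.\ indicators, so Hoeffding's inequality together with a union bound over the $n$ types gives $\P\big(d_{TV}(\mathcal P, \hat P_L^{(*)}) > \eta\big) \le 2n\exp(-cL\eta^2/n^2)$ for an absolute constant $c$. Choosing $\eta_T := Cn\sqrt{\log(nT)/L}$ with $C$ large makes this at most $1/T$, and with $L = \floor{T^{2/3}(\log T)^{1/3}}$ one gets $\eta_T = O(T^{-1/3}(\log T)^{1/3})$. On the good event the second summand is at most $M' T \eta_T = O(T^{2/3}(\log T)^{1/3})$; on its complement, of probability at most $1/T$, I would simply bound $d_{TV} \le 1$ and $\|p_H - p(t)^*\| \le \sqrt m$, contributing $O(1)$ to the expected regret. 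Summing the three stages yields the claimed $R_T(\pi_1) = O(T^{2/3}(\log T)^{1/3})$.

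The step I expect to be the main obstacle is not any single estimate but the balancing that pins down $L$: exploration costs $\Theta(L)$, while through Theorem~\ref{regret theorem} the exploitation cost is $\Theta(T\sqrt{\log T / L})$, and equating these forces $L \asymp T^{2/3}(\log T)^{1/3}$ — which is also where the $(\log T)^{1/3}$ factor in the statement comes from, inherited from the $\sqrt{\log T}$ of the union bound. The remaining care is purely bookkeeping around the algorithm's randomness: checking that the search stage performs $\Omega(L)$ successful updates with high probability (so its error really is exponentially small), and that the rare event on which $\hat P_L^{(*)}$ is inaccurate cannot inflate the expectation.
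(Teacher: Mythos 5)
Your proposal is correct and follows essentially the same route as the paper: bound the two exploration phases trivially by $O(T^{2/3}(\log T)^{1/3})$, then apply Theorem \ref{regret theorem} to the exploitation phase, controlling the total-variation term by an exponential concentration bound at threshold $\asymp T^{-1/3}(\log T)^{1/3}$ and the pricing term by the geometric contraction of the binary search together with a high-probability lower bound on the number of successful updates. The only cosmetic differences are that you use a per-coordinate Hoeffding plus union bound (and a cleaner $1/T$-probability bad event) where the paper cites a DKW-type bound, and you leave the Chernoff-type deviation bound on the update count as acknowledged bookkeeping, which the paper writes out via a relative-entropy bound on a Bernoulli sum.
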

\begin{proof}
By Theorem \ref{regret theorem} we have 
\begin{align}\label{4.1 eq 1}
    R_T(\pi_1)&\leq 2(T)^{\frac{2}{3}}(\log T)^{\frac{1}{3}}\max_{p\in \mathbb{R}^m}\sum_{j=1}^{m}\mathbb{E}_{i\sim \mathcal{P}}\left\{\left[\frac{\partial }{\partial k_j}v_i(j,k_j)\right]^{-1}(p_j)\cdot (p_j-c_j)\right\}\\
    +& (T-2(T)^{\frac{2}{3}}(\log T)^{\frac{1}{3}})  \mathbb{E}_{i\sim \mathcal{P}}\sum_{t=2(T)^{\frac{2}{3}}(\log T)^{\frac{1}{3}}}^T\left(M\left\|p((T)^{\frac{2}{3}}(\log T)^{\frac{1}{3}})-p((T)^{\frac{2}{3}}(\log T)^{\frac{1}{3}})^*\right\|\right)\\
    +& (T-2(T)^{\frac{2}{3}}(\log T)^{\frac{1}{3}})  \mathbb{E}_{i\sim \mathcal{P}}\sum_{t=2(T)^{\frac{2}{3}}(\log T)^{\frac{1}{3}}}^TM^{\prime} d_{T V}\left(\mathcal{P}, \hat{P}_{(T)^{\frac{2}{3}}(\log T)^{\frac{1}{3}}}^{(*)}\right)
\end{align}
To prove the theorem, it suffices to show the second and third terms are $O((T)^{\frac{2}{3}}(\log T)^{\frac{1}{3}})$. We first show that 
$$T\mathbb{E}_{i\sim \mathcal{P}}d_{T V}\left(\mathcal{P}, \hat{P}_{(T)^{\frac{2}{3}}(\log T)^{\frac{1}{3}}}^{(*)}\right)\sim O((T)^{\frac{2}{3}}(\log T)^{\frac{1}{3}}).$$
Indeed, after applying the exponential convergence \cite{8}
\begin{align}
    &T\mathbb{E}_{i\sim \mathcal{P}}d_{T V}\left(\mathcal{P}, \hat{P}_{(T)^{\frac{2}{3}}(\log T)^{\frac{1}{3}}}^{(*)}\right)\\
    \leq & T\cdot \mathbb{P}\left(d_{T V}\left(\mathcal{P}, \hat{P}_{(T)^{\frac{2}{3}}(\log T)^{\frac{1}{3}}}^{(*)}\right)\geq T^{-\frac{1}{3}}\log^{\frac{1}{3}}(T) \right)+T\cdot T^{-\frac{1}{3}}\log^{\frac{1}{3}}(T)\\
    \leq &2n\exp\left( -\frac{2}{n}(T)^{\frac{2}{3}}(\log T)^{\frac{1}{3}} \left(T^{-\frac{1}{3}}\log^{\frac{1}{3}}(T)\right)^2\right)+T\cdot T^{-\frac{1}{3}}\log^{\frac{1}{3}}(T)\\
    \leq& O((T)^{\frac{2}{3}}(\log T)^{\frac{1}{3}}).
\end{align}
Now we compute the expected distance of approximated optimal pricing strategy to the true optimal pricing strategy. Since our algorithm employs a version of binary search, every update of the price shortens the distance of each $p_j(h)$ to $p_j^*$ by half. Since the first order of the revenue is bounded as discussed earlier, the convergence rate is of order $O((0.5)^h)$ where $h$ is the number of updates. Hence, we need to first provide a large deviation on the number of updates $H((T)^{\frac{2}{3}}(\log T)^{\frac{1}{3}})$. By our construction, $2(T)^{\frac{2}{3}}(\log T)^{\frac{1}{3}}$ is bounded below by a sum of Bernoulli trials with parameter $\prod_{i=1}^{n}w_i>0$. Let $$S_{(T)^{\frac{2}{3}}(\log T)^{\frac{1}{3}}}:=\sum_{i=1}^{(T)^{\frac{2}{3}}(\log T)^{\frac{1}{3}}}B(\prod_{i=1}^{n}w_i)$$
denote the sum of Bernoulli trials of this parameter. Then, by \cite{5}, we can bound the large deviation as, for some $\delta\in(0,1)$
\begin{align} &\mathbb{P}\left(S_{(T)^{\frac{2}{3}}(\log T)^{\frac{1}{3}}}\leq \delta {(T)^{\frac{2}{3}}(\log T)^{\frac{1}{3}}}\cdot \prod_{i=1}^{n}w_i \right)\\
\leq &\exp\left(-H\left(\delta\prod_{i=1}^{n}w_i,\prod_{i=1}^{n}w_i\right)(T)^{\frac{2}{3}}(\log T)^{\frac{1}{3}} \right),\end{align}
where the $H$ function is the relative entropy of the system:
$$
H(a, p) \equiv(a) \log \left(\frac{a}{p}\right)+(1-a) \log \left(\frac{1-a}{1-p}\right).
$$
With this large deviation result, we can bound the third term in \ref{4.1 eq 1}, after fixing any arbitrary $\delta$ and some large constant $m'''$:

\begin{align}
    &T\cdot \mathbb{E}_{i\sim \mathcal{P}}\left\|p\left((T)^{\frac{2}{3}}(\log T)^{\frac{1}{3}}\right)-p\left((T)^{\frac{2}{3}}(\log T)^{\frac{1}{3}}\right)^*\right\|\\
       \leq &T\cdot M'''\cdot \exp{\left(\frac{1}{2}\log 0.5\cdot \delta (T)^{\frac{2}{3}}(\log T)^{\frac{1}{3}}\cdot \prod_{i=1}^{n}w_i \right)}\\
      +&T\cdot \mathbb{P}\left(S_{(T)^{\frac{2}{3}}(\log T)^{\frac{1}{3}}}\leq \delta {(T)^{\frac{2}{3}}(\log T)^{\frac{1}{3}}}\cdot \prod_{i=1}^{n}w_i \right)\\
     \cdot &\max(\left\|p\left((T)^{\frac{2}{3}}(\log T)^{\frac{1}{3}}\right)-p\left((T)^{\frac{2}{3}}(\log T)^{\frac{1}{3}}\right)^*\right\| )\\
     \leq &O((T)^{\frac{2}{3}}(\log T)^{\frac{1}{3}}).
\end{align}
After bounding the second and third terms in \ref{4.1 eq 1}, we finally have 
$$
R_T\left(\pi_1\right) \leq O\left((T)^{\frac{2}{3}}(\log T)^{\frac{1}{3}}\right).
$$
\end{proof}
The major contribution of the regret comes from the approximation of the underlying distribution taking the form of total variation distance. The error generated in the approximation of the optimal value is actually not as large as the former. Hence, the main restriction of this type of algorithm is the incomplete information regarding the consumer population, instead of their utilities, which can be detected relatively fast by perturbing the prices; whereas the population can take a long time to survey. 

\section{Dynamical Learning Algorithm}
Unlike the one-time learning algorithm, the geometric learning algorithm updates the information while making the decision. It can be more effective since it incorporates information throughout the operation period. We continue to use the notation from the previous section, with an additional fixed quantity $L$, which is the length of the first learning period. And $S$ denotes the multiplier ratio of the geometric learning. Normally, we can take $S=2$.
	\begin{algorithm}[H]
		\caption{Dynamical Learning Algorithm}\label{alg:2}
		\begin{algorithmic}[1]
			\State Input: $(\{c_j\}, T, S, L)$.
			\State Initiate the price equals the cost: $p=c$.
			\For {$t\leq L$}
			\State Sell according to $p=c$ and compute the empirical distribution for types of consumer $X$ up to time $S$ as $\hat{P}_S^{(*)}$.
			\EndFor

		    \For  {$q\leq \log_S \frac{T}{L}$}
		    \For {$S^{q}L+\sqrt{T}\sqrt{\log T}\geq t> S^qL$}
		    
		    	\State Initiate $h=0$, and $p_0=p_{H'}$, the last updated price from the previous learning period.
			
				\State Perform binary search for each type: set the test price $$\underline{p_{j,h}}=p_{j,h}-\frac{|p_{j,h-1}|}{2}$$ and $$\bar{p_{j,h}}=p_{j,h}+\frac{|p_{j,h-1}|}{2},$$	where $\underline{p_{j,-1}},\bar{p_{j,-1}}$ are the lower and upper bound for the feasible $p_j$.	After observing all $$\{k_1(\underline{p_{j,h}}),\cdots, k_n(\underline{p_{j,h}})\}\cup\{k_1(\bar{p_{j,h}}),\cdots, k_n(\bar{p_{j,h}})\} $$ let
				$$p_{j,h}=\argmax_{p_{j,h}\in \{ \underline{p_{j,h}},\bar{p_{j,h}}\}} \mathbb{E}_{i\sim 	\hat{P}_L^{(*)}} k_i(p_j)\cdot (p_j-c_j).$$
				\State $h=h+1$.
				\EndFor
				
				\For {$S^{q+1}L\geq t> S^qL+\sqrt{T}\sqrt{\log T}$}
				
			\State Sell according to $p_{(q,H(\sqrt{T}\sqrt{\log T}))}$ where ${(q,H(\sqrt{T}\sqrt{\log T}))}$ is the index of the last updated price. 
			\EndFor
			\State Compute the empirical distribution $\hat{P}_{S^{q+1}L}^{(*)}$.
		
				\EndFor
			
		\end{algorithmic}
	\end{algorithm}
	Essentially what this algorithm does is to accumulate the information regarding the consumer distribution throughout the learning periods, and make necessary changes to the pricing strategy given better approximated distribution. Like the previous algorithm, it only requires memory storage of $2n\cdot m$ units for the price perturbation, while the learning of the distribution is memory-free. The overall complexity of this algorithm is simply $O(T)$, as the learning is spread over the entire period. This algorithm, in addition, takes a similar form as \cite{28} which gives an order of regret $O(\sqrt{T} \log(T)^{4.5})$ under a different context.

		\begin{theorem}\label{thm alg2}
	With the policy $\pi_2$ specified in Algorithm \ref{alg:2}, 
	$$R_T(\pi_1)\leq O((T)^{\frac{1}{2}}(\log T)^{\frac{1}{2}})$$
	\end{theorem}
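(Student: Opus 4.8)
The plan is to reduce the regret to the bound of Theorem \ref{regret theorem} and then control its two error terms separately over the geometric phases of Algorithm \ref{alg:2}. Write $Q=\lceil\log_S(T/L)\rceil$ for the number of phases; phase $q$ occupies the time window $(S^qL,\,S^{q+1}L]$, has length $\Theta(S^qL)$, and throughout it the algorithm prices against the empirical distribution $\hat P^{(*)}_{S^qL}$ formed from the $\Theta(S^qL)$ consumers observed in the earlier phases (the initial window $t\le L$ is a constant-length warm-up contributing $O(1)$). By Theorem \ref{regret theorem},
\begin{equation*}
R_T(\pi_2)\;\le\;\sum_{t=1}^{T}\Bigl(M\,\|p(t)-p(t)^*\|+M'\,d_{TV}\bigl(\mathcal P,\hat P^{(*)}_t\bigr)\Bigr),
\end{equation*}
and I would split both sums over the $Q+1$ phases and bound each phase's contribution.

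For the distributional term, on phase $q$ I would invoke the exponential concentration of the empirical measure in total variation already used for Algorithm \ref{alg:1} (\cite{8}), now with the phase-dependent threshold $\epsilon_q\asymp\sqrt{(\log T)/(S^qL)}$. Since $\hat P^{(*)}_t=\hat P^{(*)}_{S^qL}$ throughout phase $q$, the event $\{d_{TV}(\mathcal P,\hat P^{(*)}_{S^qL})>\epsilon_q\}$ has probability at most $2n\exp(-c\,S^qL\,\epsilon_q^2)$, which for a large enough constant in $\epsilon_q$ is $\le T^{-2}$, so the rare-event contribution --- at most the phase length times the bounded per-step revenue $M'$ --- is negligible (summing to $o(1)$ over the $Q=O(\log T)$ phases). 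On the complementary event phase $q$ contributes at most $\Theta(S^qL)\cdot M'\epsilon_q\asymp\sqrt{S^qL\,\log T}$, and because the $S^qL$ grow geometrically this sum is dominated by its last term:
\begin{equation*}
\sum_{q=0}^{Q}\sqrt{S^qL\,\log T}\;\asymp\;\sqrt{S^QL\,\log T}\;\asymp\;\sqrt{T\log T},
\end{equation*}
using $S^QL\asymp T$. Thus the distributional term is $O\bigl(T^{1/2}(\log T)^{1/2}\bigr)$.

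For the optimization term, within each phase the binary search targets the fixed maximizer $p(t)^*=\argmax$ over $\hat P^{(*)}_{S^qL}$ --- exactly the $p(t)^*$ of Theorem \ref{regret theorem} --- and by concavity (Theorem \ref{theorem 1}) each successful update halves the bracketing interval, so after $H$ updates every test price posted lies within $D\,2^{-H}$ of $p(t)^*$, where $D$ is the diameter of the feasible price box; by the Lipschitz estimate behind Proposition \ref{regret} the per-step optimization error is then $O(D\,2^{-H})$. As in the proof for Algorithm \ref{alg:1}, the number of successful updates stochastically dominates a binomial with success probability $\rho=\prod_{i=1}^n w_i>0$ (all consumer types appearing among the arrivals, so the empirical revenue at both test prices can be evaluated), so the Chernoff/relative-entropy bound of \cite{5} makes $H\ge\log_2 T$ within the first $O(\log T/\rho)$ arrivals of the segment with probability $1-\exp(-\Omega(\log T))$; after that the per-step error is below $1/T$. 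Hence one binary-search segment contributes only $O(\log T)$ regret, so over the $Q=O(\log T)$ phases the optimization term totals $O\bigl((\log T)^2\bigr)$, which is dominated by the distributional term. Combining the two, $R_T(\pi_2)=O\bigl(T^{1/2}(\log T)^{1/2}\bigr)$.

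The main obstacle I anticipate is the second step: the per-phase thresholds $\epsilon_q$ must be large enough that the concentration-failure probabilities stay summably small across all $\Theta(\log T)$ phases, yet small enough that the weighted sum $\sum_q(\text{length}_q)\,\epsilon_q$ does not acquire an extra factor of $\log T$ from the number of phases. The geometric spacing of the learning periods is exactly what makes that sum collapse to its last term $\asymp\sqrt{T\log T}$; pinning down the constants and the log-power there is the real work, whereas the decomposition (Step 1) and the binary-search analysis (Step 3) carry over almost verbatim from the treatment of Algorithm \ref{alg:1}.
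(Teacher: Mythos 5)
Your proposal is correct and follows essentially the same route as the paper: decompose the regret via Theorem \ref{regret theorem}, bound the distributional error phase-by-phase using the exponential concentration of the empirical distribution with a threshold of order $\sqrt{\log T/(S^qL)}$, bound the optimization error via the geometric contraction of the binary search together with a Bernoulli large-deviation bound on the number of successful updates, and let the geometric phase lengths collapse the resulting sum to its last term of order $\sqrt{T\log T}$. The only difference is bookkeeping --- the paper states the per-phase bound as $B\sqrt{k\log k}$ and you account for the optimization term globally as $O((\log T)^2)$ --- which does not change the argument.
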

	\begin{proof}
	First, we claim that it suffices to show for any $k$ whose learning period is $k/S$, we have, for some large constant $B$, we have 
	\begin{align}
	    f(k):=&\sum_{k\geq t\geq k/S} \sum_{j=1}^{m}\mathbb{E}_{i \sim \mathcal{P}}\left\{
\left[\frac{\partial}{\partial k_j} v_i\left(j, k_j\right)\right]^{-1}\left(p_j^*\right) \cdot\left(p_j^*-c_j\right)\right\}\\
-&\sum_{k\geq t\geq k/S} \sum_{j=1}^{m}\mathbb{E}_{i \sim \mathcal{P}}\left\{\left[\frac{\partial}{\partial k_j} v_i\left(j, k_j\right)\right]^{-1}\left(p_j(t)\right) \cdot\left(p_j(t)-c_j\right)\right\}\\
\leq &B(\sqrt{k}\sqrt{\log k}).
	\end{align}
	Indeed, suppose the above if true, then
		\begin{align}
&	R_T(\pi_2)=\sum_{l\leq \log_S(T/L)+1} \mathbb{E} \sum_{t=\floor{(S^{l-1})L+1}}^{\floor{(S^{l})L}} 
\sum_{j=1}^m \mathbb{E}_{i \sim \mathcal{P}}\\
&\left\{
\left[\frac{\partial}{\partial k_j} v_i\left(j, k_j\right)\right]^{-1}\left(p_j^*\right) \cdot\left(p_j^*-c_j\right)-
\left[\frac{\partial}{\partial k_j} v_i\left(j, k_j\right)\right]^{-1}\left(p_j(t)\right) \cdot\left(p_j(t)-c_j\right)\right\}
\\
	\leq & B \sum_{l\leq \log_S(T/L)+1} (\sqrt{LS^l}\sqrt{\log LS^l})\\
	\leq & B \sqrt{T}\sqrt{\log T}+ B\sum_{h=0}^{\infty} \sqrt{\frac{T}{S^h}}\sqrt{\log \frac{T}{S^h}}\\
	\leq & B \sqrt{T}\sqrt{\log T}+2B\sqrt{T}\sqrt{\log T} \left(\sum_{h=0}^{\infty} \sqrt{\frac{1}{S^h}}\right)\\
	\leq & O(\sqrt{T}\sqrt{\log T}).
\end{align}
	Now, we will prove the bound required for the claim. Indeed, similar to what we did before,
	\begin{align}
	    f(k)\leq & \mathbb{E}_{i \sim \mathcal{P}}\sum_{k\geq t\geq k/S} \left( M\| p(k+\sqrt{k}\sqrt{\log k})-p(k+\sqrt{k}\sqrt{\log k})^*\|\right)\\
	    +& \mathbb{E}_{i \sim \mathcal{P}}\sum_{k\geq t\geq k/S} M^{\prime} d_{T V}\left(\mathcal{P}, \hat{P}_{k/S}^{*}\right)\\
	    \leq &k\cdot M'''\cdot \exp{\left(\frac{1}{2}\log 0.5\cdot \delta (k)^{\frac{1}{2}}(\log k)^{\frac{1}{2}}\cdot \prod_{i=1}^{n}w_i \right)}\\
      +&k\cdot \mathbb{P}\left(S_{\sqrt{k}\sqrt{\log k}}\leq \delta \sqrt{k}\sqrt{\log k}\cdot \prod_{i=1}^{n}w_i \right)\\
     \cdot &\max(\left\|p\left((k)^{\frac{1}{2}}(\log k)^{-\frac{1}{2}}\right)-p\left((k/S)^{\frac{1}{2}}(\log k/S)^{-\frac{1}{2}}\right)^*\right\| )\\
     +&k\cdot \mathbb{P}\left(d_{T V}\left(\mathcal{P}, \hat{P}_{k/S}^{(*)}\right)\geq \sqrt{\frac{LS}{2}} k^{-\frac{1}{2}}\log^{\frac{1}{2}}(k) \right)+k\cdot k^{-\frac{1}{2}}\log^{\frac{1}{2}}(k)\\
    \leq &O(\sqrt{k}\sqrt{\log k})
	\end{align}
	where we have $$k\cdot M'''\cdot \exp{\left(\frac{1}{2}\log 0.5\cdot \delta (k)^{\frac{1}{2}}(\log k)^{\frac{1}{2}}\cdot \prod_{i=1}^{n}w_i \right)}\leq O(\sqrt{k}),$$
	then,

	\begin{align}&k\cdot \mathbb{P}\left(S_{\sqrt{k}\sqrt{\log k}}\leq \delta \sqrt{k}\sqrt{\log k}\cdot \prod_{i=1}^{n}w_i \right)\\
	\leq &k\cdot \exp\left(-H\left(\prod_{i=1}^{n}w_i, \delta \prod_{i=1}^{n}w_i\right)\sqrt{k}\sqrt{\log k}\right)\\
	\leq & O(\sqrt{k}),
	\end{align}
	
	and finally,
	
	\begin{align}
	   & k\cdot \mathbb{P}\left(d_{T V}\left(\mathcal{P}, \hat{P}_{k/S}^{(*)}\right)\geq \sqrt{\frac{LS}{2}} k^{-\frac{1}{2}}\log^{\frac{1}{2}}(k) \right)\\
	  \leq &2 k \exp \left(-\frac{2 k}{S n}\left(  \sqrt{\frac{Sn}{2}} k^{-\frac{1}{2}} \log ^{\frac{1}{2}}(k)\right)^2\right)\\
	  \leq &O(\sqrt{k}).
	\end{align}
	So the proof is complete.
	
	\end{proof}

\section{Lower Bounds of Regrets}
	In this section, we will provide a simple example to illustrate a lower bound for this type of problem. Let us consider two types of consumers and one type of product at a cost 0.1 unit. Type I consumer represents 10 percent of the population with utility $\log(1+k)$ while type II has utility $1-\exp(-k)$, where $k$ is the quantity purchased, both satisfy the regularity conditions discussed in Theorem \ref{regret theorem}. The optimization problem now becomes 
	
	\begin{align}
	   & \max_{p\in \mathbb{R}^m}\sum_{j=1}^{m}\mathbb{E}_{i\sim \mathcal{P}}\left\{\left[\frac{\partial }{\partial k_j}v_i(j,k_j)\right]^{-1}(p_j)\cdot (p_j-c_j)\right\}\\
	   =& \left(w_1\left(\frac{1}{p}-1\right)+\left(1-w_1\right)\left(-\ln p\right)\right)\left(p-0.1\right)
	\end{align}
where $w_1=0.1$ and the maximization occurs at $p=0.438$ with optimal expected revenue of $0.292$. If we have the empirical distribution showing $\tilde{w}_1\geq 0.9$, then we have $p\leq 0.324$ such that the revenue collected is less than $0.274$, a ten percent decrease. Hence, for any one-time learning algorithm $\pi_1^*$ of learning length $t$, we have 

\begin{align}
   & R_T(\pi_1^*)\geq O(t)+O\left((T-t)\mathbb{P}\left(\sum_{i=1}^{t}B(w_1)\geq (1-w_1)t \right)\right)\\
     \geq& O(t)+O\left(\frac{T-t}{\sqrt{2 t}} \exp \left(-t D\left(1-w_1 \| w_1\right)\right)\right)\\
     \geq &O(\sqrt{\log T})
\end{align}
where the second to last inequality uses a lower bound on Bernoulli sample sum concentration \cite{26} using relative entropy $D(\cdot \| \cdot)$ and the last line comes from optimizing the choice of $t$ as a function of $T$. In particular, there is a trade-off between exploration and exploitation, and the inequality essentially shows we cannot spend less than $O(\sqrt{\log T})$ to explore. Hence, there already exists one example whose regret cannot be lower than the order of $O(\sqrt{\log T})$:

\begin{theorem} \label{lower 1}
    Under the conditions of Theorem \ref{regret theorem}, there exists a model such that any one-time learning algorithm $\pi_1^*$ has a regret bounded below by 
   $$ R_T(\pi_1^*)\geq O(\sqrt{\log T}).$$
\end{theorem}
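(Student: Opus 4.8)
The plan is to turn the heuristic computation in the discussion above into a rigorous trade-off between the length $t$ of the learning phase and the probability of an atypical sample, with three quantitative ingredients: an exact solution of the idealized problem on the displayed instance, a constant per-round loss on a large-deviation event, and a reverse Chernoff bound for that event. \textbf{Step 1 (the instance).} Fix $m=1$, $n=2$, $c=0.1$, $v_1(k)=\log(1+k)$, $v_2(k)=1-e^{-k}$, $w_1=0.1$. On the feasible range $p\in(0.1,1]$ one has $[v_1']^{-1}(p)=\tfrac{1}{p}-1$ and $[v_2']^{-1}(p)=-\ln p$, so the idealized revenue is $\Phi(p,w_1)=\big(w_1(\tfrac{1}{p}-1)+(1-w_1)(-\ln p)\big)(p-0.1)$. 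Both utilities meet the third-order condition and have bounded feasible region (Section~3), so Theorem~\ref{theorem 1} gives that $\Phi(\cdot,w_1)$ is concave with a unique maximizer $p^*(w_1)$; a routine one-variable computation yields $p^*(0.1)\approx0.438$, $\mathrm{OPT}:=\Phi(p^*(0.1),0.1)\approx0.292$, and (as recorded in the discussion preceding the theorem) the committed price obeys $p^*(\hat w_1)\le0.324$ whenever $\hat w_1\ge0.9$. I would simply record these numbers.

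\textbf{Step 2 (constant loss on a bad event).} A one-time learning algorithm with learning length $t$ forms the empirical weight $\hat w_1=\tfrac{1}{t}\sum_{i\le t}\mathbbm{1}\{X_i=\text{type I}\}$ and then commits, for the remaining $T-t$ rounds, to the price $p^*(\hat w_1)$ that is optimal under the estimated model. On the event $\mathcal B_t:=\{\hat w_1\ge0.9\}$ this committed price is at most $0.324$, so since $\Phi(\cdot,0.1)$ is increasing on $(0.1,0.438)$ the true revenue earned is at most $\Phi(0.324,0.1)<0.274$, i.e.\ at least $c_0:=\mathrm{OPT}-0.274>0$ below optimum \emph{per round}. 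Moreover the algorithm sells at (essentially) cost during its learning phase and hence forgoes the full optimal revenue for $\Theta(t)$ rounds. Together these two contributions give
\[
R_T(\pi_1^*)\ \ge\ \Omega(t)\ +\ c_0\,(T-t)\,\mathbb{P}(\mathcal B_t).
\]

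\textbf{Step 3 (reverse tail bound and optimization).} Since $t\hat w_1$ is a sum of $t$ Bernoulli$(w_1)$ variables, $\mathcal B_t$ is exactly $\{\sum_{i\le t}B(w_1)\ge(1-w_1)t\}$, and the Stirling/reverse-Chernoff lower bound of \cite{26} gives $\mathbb{P}(\mathcal B_t)\ge\tfrac{1}{\sqrt{2t}}\exp(-t\,D(1-w_1\,\|\,w_1))$ with $D(0.9\,\|\,0.1)$ a positive constant. Substituting, $R_T(\pi_1^*)\gtrsim t+\tfrac{T-t}{\sqrt{2t}}e^{-tD(0.9\|0.1)}$. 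If $t\ge\tfrac{1}{2D}\log T$ the first term alone is $\Omega(\log T)$; if $t<\tfrac{1}{2D}\log T$ then $T-t\ge T/2$ and $e^{-tD}\ge T^{-1/2}$, so the second term is at least $\tfrac{\sqrt{T}}{2\sqrt{2t}}\ge\tfrac{\sqrt{DT}}{2\sqrt{\log T}}\to\infty$. In either case $R_T(\pi_1^*)\ge\Omega(\sqrt{\log T})$ (indeed $\Omega(\log T)$, but the weaker claim is what is stated), and since $t=t(T)$ was an arbitrary learning length the bound holds for every one-time learning algorithm on this instance, which is Theorem~\ref{lower 1}.

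\textbf{Main obstacle.} The one genuinely delicate point is the $\Omega(t)$ exploration term of Step~2: I need that \emph{every} one-time learning algorithm, not just Algorithm~\ref{alg:1}, pays a constant per round during a learning phase of length $t$ and then commits to a price determined by $\hat w_1$. If the class is \emph{defined} to post $p=c$ while learning this is immediate. If an arbitrary pre-committed exploration price is allowed, I would replace Step~2 by a two-hypothesis Le Cam argument with $w_1\in\{0.1,w'\}$: since $\Phi$ is strongly concave near its maximum and $p^*(0.1)\ne p^*(w')$, no single committed price is $o(1)$-optimal under both hypotheses, while $t$ i.i.d.\ observations cannot separate the two hypotheses with total error probability below $\tfrac12\big(1-\sqrt{t\,\mathrm{KL}/2}\big)$, forcing an $\Omega(T-t)$ term that must be offset by exploration; the same scalar optimization over $t$ then closes the argument. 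Everything else — the calculus of Step~1, the tail bound of Step~3, and the optimization over $t$ — is routine.
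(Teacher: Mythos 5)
Your proposal follows essentially the same route as the paper's own argument: the identical two-type instance with $\log(1+k)$ and $1-e^{-k}$ utilities, the decomposition $R_T \gtrsim t + (T-t)\,\mathbb{P}(\hat w_1 \ge 0.9)$, the reverse Chernoff bound $\mathbb{P} \ge \frac{1}{\sqrt{2t}}e^{-tD(1-w_1\|w_1)}$ from the cited concentration reference, and the final optimization over the learning length $t$. If anything, you are more careful than the paper on the one delicate point — justifying the $\Omega(t)$ exploration loss for an \emph{arbitrary} one-time learning algorithm via a Le Cam two-hypothesis fallback, which the paper simply asserts — so your write-up is a faithful (and slightly tightened, since you actually obtain $\Omega(\log T)$) version of the published proof.
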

	
	Now, we use the same model yet the utilities are no longer known to the seller, we can show for any geometric update algorithm $\pi_2^*$ with initial learning period $L$ and price perturbation period of $t$. Suppose we never have less than two successful updates, then surely the algorithm will fail to learn the price using binary search. The probability of this no-learning event in a period of length $t$ is bounded below by $(1-p_j)^t$ where $p_j$ is the distribution of any type of consumer. Then, for Bernoulli distribution of $B(\cdot)$ and Geometric distribution $G(\cdot)$:
	
	\begin{align}
   & R_T(\pi_2^*)\geq \sum_{i=1}^{\floor{\log_S T/L}-1}O\left((LS^{i+1}-LS^i)\mathbb{P}\left(\sum_{i=LS^i}^{t}B(w_1)\geq (1-w_1)t \right)\right)\\
   +& \sum_{i=1}^{\floor{\log_S T/L}-1}O\left[\left(LS^{i+1}-LS^i\right)\mathbb{P}\left( G(p_j)\geq t\right)\right]\\
    \geq& \sum_{i=1}^{\floor{\log_S T/L}} O\left(\frac{LS^{i+1}-LS^i}{\sqrt{2 N^i}} \exp \left(-S^iL D\left(1-w_1 \| w_1\right)\right)\right)\\
     +&\sum_{i=1}^{\floor{\log_S T/L}}O\left[\left(LS^{i+1}-LS^i\right)(1-p_j)^tp_j \right]\\
     \geq &O(\sqrt{\log T})
\end{align}
	The last inequality holds because again, by the exploration and exploitation, learning less than $O(\sqrt{\log T})$ would cause the binary search for the optimal price to under-perform, let alone additional errors from the distribution approximation. This leads to the following theorem:
	\begin{theorem}\label{lower 2}
    Under the conditions of Theorem \ref{regret theorem}, there exists a model such that Algorithm \ref{alg:2} $\pi_2$ has a regret bounded below by 
   $$ R_T(\pi_2)\geq O(\sqrt{\log T}).$$
\end{theorem}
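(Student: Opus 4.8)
The plan is to adapt the lower-bound argument of Theorem \ref{lower 1} to the geometric-update setting, tracking the failure probability of the binary search across all of the logarithmically many learning epochs. First I would fix the explicit two-type, one-product model already set up in the excerpt (type I with utility $\log(1+k)$ at population weight $w_1=0.1$, type II with utility $1-e^{-k}$, cost $0.1$), so that all the numerical facts — the true optimum $p^*=0.438$ with revenue $0.292$, and the drop to revenue $\leq 0.274$ whenever the empirical weight overshoots to $\tilde w_1\geq 0.9$ — are available verbatim. The key conceptual point is that Algorithm \ref{alg:2} has two independent ways to incur regret in each epoch: (i) the empirical distribution can be badly wrong, which costs a constant fraction of revenue on the whole epoch, and (ii) even with a correct distribution, the binary search in a perturbation window of length $t=\sqrt{T}\sqrt{\log T}$ can fail to produce the two-or-more successful updates needed to localize the optimal price, which again costs a constant fraction on the epoch. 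Both events have probabilities that decay exponentially but in a parameter that grows only like a power of $T$ (or like $L S^i$ for the $i$-th epoch), so summing the epoch contributions against the exponential tail and then optimizing the free parameter will produce the $\Omega(\sqrt{\log T})$ floor.

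The steps, in order, are as follows. (1) For epoch $i$ (of length $\sim L S^i$, with $S^{i}\le T/L$), lower-bound the probability that the empirical count of type-I arrivals by the start of the perturbation window exceeds $(1-w_1)$ times the sample size; this is a Bernoulli sum lower-deviation event, and by the relative-entropy lower bound on binomial concentration from \cite{26} it is at least $\frac{1}{\sqrt{2 S^i L}}\exp(-S^i L\, D(1-w_1\,\|\,w_1))$. Multiplying by the epoch length $L(S^{i+1}-S^i)$ and summing over $i\le \lfloor \log_S(T/L)\rfloor$ gives the first sum displayed in the excerpt. (2) Separately, lower-bound the probability that the binary search never achieves two successful price updates during a perturbation window of length $t$: the number of trials until a success of any given consumer type is Geometric with parameter $p_j$, so $\mathbb{P}(G(p_j)\ge t)\ge (1-p_j)^{t}$, and multiplying by the epoch length and summing yields the second sum. (3) Since each term on the right-hand side is nonnegative, the regret is bounded below by the single term obtained by choosing the epoch index and the window length $t$ optimally; trading off the exponential decay against the polynomial prefactors — exactly as in the proof of Theorem \ref{lower 1}, where one sets the exploration length $\asymp \log T$ — forces at least one epoch to contribute $\Omega(\sqrt{\log T})$, which is the claimed bound.

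The main obstacle I expect is making the per-epoch lower bound genuinely rigorous rather than heuristic: one must verify that the "bad" distribution event in epoch $i$ and the "failed binary search" event are not swamped by later epochs correcting the price (the geometric schedule re-learns, so a single bad epoch does not by itself persist — the argument must locate a specific epoch, or a specific window length $t$, where the cost is incurred and cannot be recouped within that epoch), and that the constant-fraction revenue loss from $\tilde w_1\ge 0.9$ indeed applies to the \emph{entire} exploitation sub-window of that epoch, not just an instant. I would handle this by noting that within a fixed epoch the exploited price $p_{(q,H(\sqrt T\sqrt{\log T}))}$ is held constant over a window of length $\ge S^{i+1}L - S^i L - \sqrt T\sqrt{\log T} = \Omega(S^i L)$, so conditioning on the bad event at the start of the window makes the per-step regret a positive constant over a window of length $\Omega(S^i L)$, which is exactly the epoch-length factor used above. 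The remaining steps — the relative-entropy tail bound, the geometric tail bound, and the final optimization over $t$ — are routine and mirror the computations already carried out for Theorem \ref{lower 1} and in the proof of Theorem \ref{thm alg2}.
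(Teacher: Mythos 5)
Your proposal takes essentially the same route as the paper: the same two-type example, the same per-epoch decomposition into the distribution-misestimation event (bounded below via the relative-entropy lower tail from \cite{26}) and the failed-binary-search event (geometric tail $(1-p_j)^t$), each multiplied by the epoch length and summed over the geometric schedule, followed by the same exploration--exploitation tradeoff to extract the $\Omega(\sqrt{\log T})$ floor. Your added care about why a bad event persists over the whole exploitation sub-window of an epoch is a point the paper glosses over, but the argument is otherwise the paper's own.
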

Let us be careful that while the previous theorem holds for any one-time learning algorithm, this theorem holds only for our proposed geometric update algorithm. We believe that better geometric algorithms exist under different contexts, so we cannot find one example that undermines all geometric learning algorithms. Combing Theorem \ref{thm alg2}, Theorem \ref{lower 1} and Theorem \ref{lower 2}, we know the lower and upper bounds leave a region of order $O(\sqrt{T})$, where future improvements can be made. This lower bound result is different from \cite{28}, which shows the lower bound can also be made to the order $O(\sqrt{T})$. The example they use is called "uninformative price-learning". However, this is due to the fact that they require a more explicit search on the consumer utility, while our binary search does not, as long as the third-order smoothness condition is imposed. Hence, every step of our search is equally "informative". As a result, their example that establishes $O(\sqrt{T})$ lower bound cannot be used in our context. 
	
		\section*{Acknowledgement}
I am grateful for the continuous support from Yinyu Ye, without whom this paper cannot be made possible. Conversations with Devansh Jalota are also helpful.

	\bibliographystyle{apalike}
	
	\bibliography{OF}
	
\end{document}